\documentclass[letter,final]{IEEEtran}

\usepackage{cite}
\usepackage[cmex10]{amsmath}
\usepackage{amssymb}
\usepackage{mathrsfs}
\usepackage{amsfonts}
\usepackage{amsthm}
\usepackage{setspace}
\usepackage{graphicx,color,tikz,pgfplots}
\usepackage{caption}
\usepackage{array}
\usepackage{makeidx}
\usepackage{float}
\usepackage{pxfonts}
\usepackage{algorithm}
\usepackage{algorithmic}
\usepackage{url}
\usepackage{empheq}
\usepackage{lmodern}
\usepackage{multirow}
\usepackage{colortbl}
\usepackage{hhline}
% correct bad hyphenation here
\hyphenation{op-tical net-works semi-conduc-tor}

\newcommand*{\vecteur}[1]{\textbf{#1}}

\newcommand*{\bsym}[1]{\boldsymbol{#1}}

\renewcommand{\vec}{\text{vec}}
\newcommand*{\bm}[1]{{\color{black} #1}}

\theoremstyle{definition}
\newtheorem{theorem}{Theorem}[section]

\theoremstyle{plain}
\newtheorem{lemma}{Lemma}[section]

\theoremstyle{plain}
\newtheorem*{remark}{Remark}

\begin{document}

% can use linebreaks \\ within to get better formatting as desired
\title{Asymptotic Performance of Complex $M$-estimators for Multivariate Location and Scatter Estimation}
%
%
% author names and IEEE memberships
% note positions of commas and nonbreaking spaces ( ~ ) LaTeX will not break
% a structure at a ~ so this keeps an author's name from being broken across
% two lines.
% use \thanks{} to gain access to the first footnote area
% a separate \thanks must be used for each paragraph as LaTeX2e's \thanks
% was not built to handle multiple paragraphs
%

\author{Bruno~M\'eriaux,~\IEEEmembership{Student Member,~IEEE,}
        Chengfang~Ren,~\IEEEmembership{Member,~IEEE,}
        Mohammed~Nabil~El~Korso,~\IEEEmembership{}
        Arnaud~Breloy,~\IEEEmembership{Member,~IEEE,}
        and~Philippe~Forster~\IEEEmembership{}% <-this % stops a space
\thanks{\bm{B. M\'eriaux and C. Ren are with SONDRA, CentraleSup\'elec, France. M.N. El Korso and A. Breloy are with LEME, Paris-Nanterre University, France. P. Forster is with SATIE, Paris-Nanterre University, France.}}
\thanks{This work is financed by the Direction G\'en\'erale de l'Armement as well as the ANR ASTRID referenced ANR-17-ASTR-0015.}% 
}

\maketitle
\begin{abstract}
The joint estimation of means and scatter matrices is often a core problem in multivariate analysis. In order to overcome robustness issues, such as outliers from Gaussian assumption, $M$-estimators are now preferred to the traditional sample mean and sample covariance matrix. These estimators are well established and studied in the real case since the seventies. Their extension to the complex case has drawn recent interest. In this letter, we derive the asymptotic performance of complex $M$-estimators for multivariate location and scatter matrix estimation.
\end{abstract}
\begin{IEEEkeywords}
Complex observations, Robust estimation of multivariate location and scatter, \bm{Complex Elliptically Symmetric} distributions.
\end{IEEEkeywords}

\IEEEpeerreviewmaketitle
\section{Introduction}
\label{sec::intro}
\IEEEPARstart{S}{everal} classical methods in multivariate analysis require the estimation of means and scatter matrices from \bm{collected observations \cite{FPOP17,Fess96,ZKCM12,OEKBLP17,ZEKP17}. In pratice, the Sample Mean and the Sample Covariance Matrix are classically used} in such procedures. Indeed, they \bm{coincide with} the Maximum Likelihood Estimators (MLE) for multivariate Gaussian data. However, they are neither robust to deviation from Gaussianity assumption nor to the presence of outliers, which could lead to a dramatic performance loss. To overcome these problems, several approaches have been proposed in the literature\bm{\cite{HRVV15,MY17}} improving the estimator's behavior under contamination through diverse criteria such as the breakdown point, the contamination bias, the finite-sample efficiency while preserving the computationally feasibility for high dimension. Among the most frequently used robust affine equivariant estimators \cite{Mar76,Rou85,Rou84,HD09,Sta81,Don82,RL87,Lop89,Yoh87,TT00}, we focus on the $M$-estimators, which have been first introduced within the framework of elliptical distributions \cite{Fra04}. The latters encompass a large number of classical distributions as for instance the Gaussian one but also non-Gaussian heavy-tailed distributions such as the $t$-, $K$- and $W$-distributions \cite{FKW90}. One of the main interests of the real $M$-estimators is to possess, under mild conditions, good asymptotic properties, namely weak consistency and asymptotic normality over the whole class of elliptical distributions\bm{\cite{Mar76,Tyl82,BB08,Ars04}}. Their extension to the complex case has drawn recent interest\bm{\cite{OK03}}, notably with the class of Complex Elliptically Symmetric (CES) distributions\bm{\cite{GG13}}. In the context of known mean, their asymptotic properties are established in \cite{MPFO13,OTKP12}. In this letter, we extend this result for multivariate location and scatter matrix $M$-estimates under CES distributed data. The achieved outcome is analogous to its real-counterpart \cite{BB08} and may be used for deriving the performance of adaptive processes in non-zero mean non Gaussian distributed observations \cite{FPOP17}.

In the following, the notation $\overset{d}{=}$, $\overset{d}{\rightarrow}$ and $\overset{\mathbb{P}}{\rightarrow}$ indicate respectively equality in distribution, convergence in law and in probability. The symbol $\perp$ refers to statistical independence. The operator $\vec{}(\vecteur{A})$ stacks all columns of $\vecteur{A}$, designated by $\left(\vecteur{a}_1,\ldots,\vecteur{a}_i,\ldots\right)$ into a vector. The operator $\otimes$ refers to Kronecker product. The notation $\mathcal{G}\mathbb{C}\mathcal{N}\left(\vecteur{0}, \bsym{\Sigma},\bsym{\Omega}\right)$ refers to the zero mean non-circular complex Gaussian distribution, where $\bsym{\Sigma}$ (respectively $\bsym{\Omega}$) denotes the covariance matrix (respectively pseudo-covariance matrix) \cite{Van95,Del14}.
\section{Problem setup}
\label{sec::setup}
Let {\small $\vecteur{Z}_N \triangleq\left(\vecteur{z}_1,\ldots,\vecteur{z}_N\right)$} be $N$ i.i.d samples of $m$-dimensional vectors, following a complex elliptical distribution, which is denoted by $\vecteur{z}_n\sim\mathbb{C}\mathcal{ES}_m\left(\vecteur{t}_e,\bsym{\Lambda},g_{\vecteur{z}}\right)$ and whose p.d.f. is proportional to
{\footnotesize \begin{equation}
p_{\vecteur{z}}\left(\vecteur{z}_n;\vecteur{t}_e,\bsym{\Lambda},g_{\vecteur{z}}\right)\propto \text{det}(\bsym{\Lambda})^{-1}g_{\vecteur{z}}\left(\left(\vecteur{z}_n-\vecteur{t}_e\right)^H\bsym{\Lambda}^{-1}\left(\vecteur{z}_n-\vecteur{t}_e\right)\right).
\end{equation}}
The vector $\vecteur{t}_e$ is the location parameter, $\bsym{\Lambda}$ denotes the scatter matrix and the function $g_{\vecteur{z}}$ is the density generator \cite{Kel70}. We aim to estimate jointly the location parameter and the scatter matrix from the observations. The complex joint $M$-estimators $\big(\mathbf{\hat{t}}_N,\mathbf{\widehat{M}}_N\big)$ are solutions of the system $\mathcal{S}\mathit{ys}_N\left(\vecteur{Z}_N,u_1,u_2\right)$ \cite{OK03}:
{\footnotesize
\begin{empheq}[left=\empheqlbrace]{align}
& \dfrac{1}{N}\sum\limits_{n=1}^N u_1\left(d\left(\vecteur{z}_n,\vecteur{t};\vecteur{M}\right)\right)\left(\vecteur{z}_n-\vecteur{t}\right) = \vecteur{0} \label{eq:moy_emp_comp}\\
& \underbrace{\dfrac{1}{N}\sum\limits_{n=1}^N u_2\left(d^2\left(\vecteur{z}_n,\vecteur{t};\vecteur{M}\right)\right)\left(\vecteur{z}_n-\vecteur{t}\right)\left(\vecteur{z}_n-\vecteur{t}\right)^H}_{\mathcal{H}\left(\vecteur{Z}_N,\vecteur{t},\vecteur{M}\right)} = \vecteur{M} \label{eq:scatter_emp_comp}
\end{empheq}}%%
with {\footnotesize $d^2\left(\vecteur{z},\vecteur{t};\vecteur{M}\right) = \left(\vecteur{z}-\vecteur{t}\right)^H\vecteur{M}^{-1}\left(\vecteur{z}-\vecteur{t}\right)$}. Let us consider {\small $\big(\vecteur{t}_e,\vecteur{M}_e\big)$} a solution related to the system $\mathcal{S}\mathit{ys}_{\infty}\left(\vecteur{z}_1,u_1,u_2\right)$:
{\footnotesize
\begin{empheq}[left=\empheqlbrace]{align}
& \mathbb{E}\left[ u_1\left(d\left(\vecteur{z}_1,\vecteur{t};\vecteur{M}\right)\right)\left(\vecteur{z}_1-\vecteur{t}\right)\right] = \vecteur{0} \\
\label{scat_conv}
&\mathbb{E}\left[ u_2\left(d^2\left(\vecteur{z}_1,\vecteur{t};\vecteur{M}\right)\right)\left(\vecteur{z}_1-\vecteur{t}\right)\left(\vecteur{z}_1-\vecteur{t}\right)^H\right]\triangleq \mathcal{H}_ {\infty}\left(\vecteur{t},\vecteur{M}\right) = \vecteur{M}
\end{empheq}}%
The functions $u_1(\cdot)$ and $u_2(\cdot)$ verify the conditions given in \cite{Mar76} \bm{for the real case (for the special case, where $u_1(s) = u_2(s^2)$, \cite{KT91} provides more general conditions). The proofs of Lemmas 1 and 2 and Theorems 1-3 of \cite{Mar76}, addressing the existence and uniqueness of $M$-estimates, are transposable to the complex field, by following the same methodology as in \cite{Mar76}. The derivations of these proofs require the same conditions on $u_1(\cdot)$ and $u_2(\cdot)$ as the ones needed in the real case. Thus, this ensures} the existence of {\small $\big(\mathbf{\hat{t}}_N,\mathbf{\widehat{M}}_N\big)$} and {\small $\big(\vecteur{t}_e,\vecteur{M}_e\big)$} as well as the uniqueness of {\small $\big(\vecteur{t}_e,\vecteur{M}_e\big)$}. In this letter, we derive the statistical performance of the complex joint $M$-estimators {\small $\big(\mathbf{\hat{t}}_N,\mathbf{\widehat{M}}_N\big)$}, namely consistency and asymptotic distribution.
\section{Consistency of the joint $M$-estimator}
\label{sec::consistency}

Let {\small $\big(\mathbf{\hat{t}}_N,\mathbf{\widehat{M}}_N\big)$} be a solution of $\mathcal{S}\mathit{ys}_N\left(\vecteur{Z}_N,u_1,u_2\right)$ and $\big(\vecteur{t}_e,\vecteur{M}_e\big)$ be the solution of the system $\mathcal{S}\mathit{ys}_{\infty}\left(\vecteur{z}_1,u_1,u_2\right)$.
\begin{theorem}
\label{small_theorem}
The complex joint $M$-estimators are consistent, i.e. \vspace*{-0.3cm}
{\footnotesize 
\begin{align}
\big(\mathbf{\hat{t}}_N,\mathbf{\widehat{M}}_N\big)\overset{\mathbb{P}}{\rightarrow} \big(\vecteur{t}_e,\vecteur{M}_e\big).
\end{align}}%
with $\vecteur{M}_e = \sigma^{-1}\bsym{\Lambda}$ in which $\sigma$ is the solution of $\mathbb{E}\left[\psi_2\left(\sigma\vert\bsym{\zeta}\vert^2\right)\right] = m$, for $\bsym{\zeta}\sim\mathbb{C}\mathcal{ES}_m\left(\vecteur{0}_m,\vecteur{I}_m,g_{\vecteur{z}}\right)$ and $\psi_2(s) = su_2(s)$.
\end{theorem}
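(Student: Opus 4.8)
The plan is to treat $\big(\mathbf{\hat{t}}_N,\mathbf{\widehat{M}}_N\big)$ as a $Z$-estimator, i.e. a zero of the stacked estimating function obtained by moving $\vecteur{M}$ to the left-hand side of \eqref{eq:scatter_emp_comp}, and to invoke the standard consistency argument for such estimators: if the empirical estimating function converges uniformly in probability to its population counterpart on a suitable region, and if the population system admits $\big(\vecteur{t}_e,\vecteur{M}_e\big)$ as a \emph{well-separated} unique zero, then the empirical solution converges to it. The existence of the empirical and population solutions and the uniqueness of the latter are already granted by the complex transposition of Maronna's Lemmas~1--2 and Theorems~1--3 referenced in the setup, so the two remaining tasks are (i) identifying $\vecteur{M}_e$ explicitly and (ii) establishing the uniform convergence together with the well-separation.

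For (i), I would use the stochastic representation of the CES model, writing $\vecteur{z}_1-\vecteur{t}_e\overset{d}{=}\bsym{\Lambda}^{1/2}\bsym{\zeta}$ with $\bsym{\zeta}\sim\mathbb{C}\mathcal{ES}_m\left(\vecteur{0}_m,\vecteur{I}_m,g_{\vecteur{z}}\right)$, whose distribution is isotropic and in particular invariant under arbitrary phase rotations $\bsym{\zeta}\mapsto e^{i\phi}\bsym{\zeta}$. First, the location equation holds at the true centre: since $u_1\big(d(\vecteur{z}_1,\vecteur{t}_e;\vecteur{M}_e)\big)$ depends on $\bsym{\zeta}$ only through $\vert\bsym{\zeta}\vert$, phase invariance forces $\mathbb{E}\big[u_1(\cdot)(\vecteur{z}_1-\vecteur{t}_e)\big]=\vecteur{0}$, so the $M$-location coincides with $\vecteur{t}_e$. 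Next I would postulate $\vecteur{M}_e=\sigma^{-1}\bsym{\Lambda}$, so that $d^2(\vecteur{z}_1,\vecteur{t}_e;\vecteur{M}_e)=\sigma\vert\bsym{\zeta}\vert^2$, and substitute into \eqref{scat_conv}. Isotropy of $\bsym{\zeta}$ makes $\mathbb{E}\big[u_2(\sigma\vert\bsym{\zeta}\vert^2)\bsym{\zeta}\bsym{\zeta}^H\big]$ proportional to $\vecteur{I}_m$; taking the trace and using $\psi_2(s)=su_2(s)$ collapses the matrix identity to the scalar fixed-point equation $\mathbb{E}\big[\psi_2(\sigma\vert\bsym{\zeta}\vert^2)\big]=m$, which pins down $\sigma$ and hence $\vecteur{M}_e$ as claimed.

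For (ii), pointwise convergence of the empirical estimating function to $\mathcal{S}\mathit{ys}_{\infty}$ is immediate from the law of large numbers applied to the i.i.d.\ summands. I would then upgrade this to convergence that is uniform over compact subsets of the parameter space (the product of $\mathbb{C}^m$ with the cone of Hermitian positive-definite matrices), using continuity of the integrands in $(\vecteur{t},\vecteur{M})$ together with a domination bound supplied by the boundedness and monotonicity conditions imposed on $\psi_1$ and $\psi_2$; combined with uniqueness, these same conditions also yield that the population zero is well-separated.

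The main obstacle is the non-compactness of the scatter cone: the argument above only controls the estimators on compact sets, so one must separately rule out escape to the boundary, i.e. show that $\mathbf{\widehat{M}}_N$ is, with probability tending to one, confined to a region bounded away from both singularity and infinity. This confinement --- the complex analogue of the bounds at the heart of Maronna's consistency proof --- is exactly where the growth and monotonicity hypotheses on $\psi_1,\psi_2$ must be used, and I expect it to be the delicate part; once it is in place, restricting the uniform convergence to the confining compact set and appealing to the well-separated unique zero delivers $\big(\mathbf{\hat{t}}_N,\mathbf{\widehat{M}}_N\big)\overset{\mathbb{P}}{\rightarrow}\big(\vecteur{t}_e,\vecteur{M}_e\big)$.
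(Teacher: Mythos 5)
Your proposal follows essentially the same route as the paper: the paper stacks the two estimating equations into a single function $\bsym{\Psi}_N(\bsym{\theta})$, applies the law of large numbers pointwise, and invokes van der Vaart's Theorem~5.9 --- which is precisely the uniform-convergence-plus-well-separated-unique-zero criterion you describe --- before identifying $\vecteur{M}_e$ through proportionality to $\bsym{\Lambda}$ and the trace of \eqref{scat_conv}. The only difference is that you are more explicit about the two points the paper delegates to citations, namely the confinement of $\mathbf{\widehat{M}}_N$ away from the boundary of the positive-definite cone (absorbed in the paper's appeal to Theorem~5.9 and the transposed Maronna results) and the isotropy/phase-invariance argument pinning down $\vecteur{t}_e$ and $\sigma$ (for which the paper cites Maronna et al., Chap.~6).
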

\begin{proof}
First, let us define $\bsym{\theta}^T \hspace*{-.2cm} = \left[\vecteur{t}^T\hspace*{-.2cm},\vec{}\left(\vecteur{M}\right)^T\right]$ and the function {\footnotesize
$\bsym{\Psi}_N\left(\bsym{\theta}\right) = \left[\hspace*{-0.2cm}\begin{array}{l}
\bsym{\Psi}_{1,N}\left(\bsym{\theta}\right)= \dfrac{1}{N}\sum\limits_{n=1}^N u_1\left(d\left(\vecteur{z}_n,\vecteur{t};\vecteur{M}\right)\right)\left(\vecteur{z}_n-\vecteur{t}\right) \\ 
\bsym{\Psi}_{2,N}\left(\bsym{\theta}\right)= \vec{}\left(\mathcal{H}\left(\vecteur{Z}_N,\vecteur{t},\vecteur{M}\right) - \vecteur{M}\right)
\end{array}\hspace*{-0.2cm}\right] $}.\\
\linebreak
The Strong Law of Large Numbers (SLLN) gives
{\small \begin{align}
\forall\,\bsym{\theta}\in\bsym{\Theta}\quad \bsym{\Psi}_N\left(\bsym{\theta}\right) \overset{\mathbb{P}}{\rightarrow}\bsym{\Psi}\left(\bsym{\theta}\right)
\end{align}}%
with {\footnotesize $\forall\,\bsym{\theta}\in\bsym{\Theta},\;\bsym{\Psi}\left(\bsym{\theta}\right) = \left[\begin{array}{l}
\bsym{\Psi}_{1}\left(\bsym{\theta}\right)= \mathbb{E}\left[ u_1\left(d\left(\vecteur{z}_1,\vecteur{t};\vecteur{M}\right)\right)\left(\vecteur{z}_1-\vecteur{t}\right)\right] \\ 
\bsym{\Psi}_{2}\left(\bsym{\theta}\right)= \vec{}\left(\mathcal{H}_ {\infty}\left(\vecteur{t},\vecteur{M}\right) - \vecteur{M}\right)
\end{array}\right] $}.\\
\linebreak
According to the Theorem 5.9 \cite[Chap. 5]{Van00} and uniqueness of solution, we can show that any {\small $\widehat{\bsym{\theta}}_N^T = \left[\mathbf{\hat{t}}_N^T,\vec{}\big(\mathbf{\widehat{M}}_N\big)^T\right]$} solution of {\small $\bsym{\Psi}_N\left(\widehat{\bsym{\theta}}_N\right) = \vecteur{0}$} converges in probability to {\small $\bsym{\theta}_e^T = \left[\vecteur{t}_e^T,\vec{}\left(\vecteur{M}_e\right)^T\right]$} solution of {\small $\bsym{\Psi}\left(\bsym{\theta}_e\right) = \vecteur{0}$}, \bm{yielding} the intended outcome. Furthermore, the matrix $\vecteur{M}_e$ is proportional to $\bsym{\Lambda}$ through a scale factor $\sigma^{-1}$ \cite[Chap. 6]{MMY06}. Multiplying \eqref{scat_conv} by $\vecteur{M}^{-1}$ and taking the trace yields $\mathbb{E}\left[\psi_2\left(\sigma\vert\bsym{\zeta}\vert^2\right)\right] = m$ of which $\sigma$ is the solution.
\end{proof}

\section{Asymptotic distribution of the joint $M$-estimators}
\subsection{Main theorem}
\label{sec::dist_asympt}
We consider {\small $\big(\mathbf{\hat{t}}_N,\mathbf{\widehat{M}}_N\big)$} a solution of $\mathcal{S}\mathit{ys}_N\left(\vecteur{Z}_N,u_1,u_2\right)$ as well as $\big(\vecteur{t}_e,\vecteur{M}_e\big)$ the solution of $\mathcal{S}\mathit{ys}_{\infty}\left(\vecteur{z}_1,u_1,u_2\right)$.
\begin{theorem}
\label{main_theorem}
Assuming that $s\psi_i^{'}(s)$ ($i=1,2$) are bounded and $\mathbb{E}\left[\psi_1^{'}\left(\sqrt{\sigma}\vert\bsym{\zeta}\vert\right)\right] > 0$, the asymptotic distribution of $\big(\mathbf{\hat{t}}_N,\mathbf{\widehat{M}}_N\big)$ is given by
{\small \begin{align*}
&\sqrt{N}\left(\big(\mathbf{\hat{t}}_N-\vecteur{t}_e,\textnormal{\vec{}}\left(\mathbf{\widehat{M}}_N-\vecteur{M}_e\right)\right) \overset{d}{\rightarrow}\left(\vecteur{a},\vecteur{b}\right)\text{with $\vecteur{a}\perp\vecteur{b}$ and}\\
&\vecteur{a}\sim\mathbb{C}\mathcal{N}\big(\vecteur{0}, \bsym{\Sigma}_t=\dfrac{\alpha}{\beta^2}\vecteur{M}_e\big),\,
\vecteur{b}\sim\mathcal{G}\mathbb{C}\mathcal{N}\left(\vecteur{0}, \bsym{\Sigma}_M,\bsym{\Omega}_M=\bsym{\Sigma}_M\vecteur{K}_m\right)
\end{align*}}%
where $\vecteur{K}_m$ is the commutation matrix satisfying $\vecteur{K}_m\vec{}\left(\vecteur{A}\right)=\vec{}\big(\vecteur{A}^T\big)$ \cite{MN79} and $\bsym{\Sigma}_M$ is obtained by
{\small \begin{align}
\bsym{\Sigma}_M = \sigma_1\vecteur{M}_e^T\otimes\vecteur{M}_e + \sigma_2\textnormal{\vec{}}\left(\vecteur{M}_e\right)\textnormal{\vec{}}\left(\vecteur{M}_e\right)^H,
\end{align}}%
in which {\footnotesize \begin{align*}
\hspace*{-0.15cm}\left\lbrace\begin{array}{l}
\alpha = m^{-1}\mathbb{E}\left[\psi_{1}^2(\sqrt{\sigma}\vert\bsym{\zeta}\vert)\right] \\
\beta = \mathbb{E}\left[\left(1-(2m)^{-1}\right)u_{1}(\sqrt{\sigma}\vert\bsym{\zeta}\vert)+(2m)^{-1}\psi_{1}^{'}\left(\sqrt{\sigma}\vert\bsym{\zeta}\vert\right)\right] \\
\sigma_1 = \dfrac{a_1(m+1)^2}{(a_2+m)^2},
\sigma_2 = a_2^{-2}\left[(a_1 -1) - a_1(a_2-1)\dfrac{m + (m+2)a_2}{(a_2 + m)^2}\right]\\
%\end{array}\right.%\text{ in which }\left\lbrace \begin{array}{l}
a_1 = \dfrac{\mathbb{E}\left[\psi_2^2\left(\sigma\vert\bsym{\zeta}\vert^2\right)\right]}{m(m+1)},\;
a_2 = \dfrac{\mathbb{E}\left[\sigma\vert\bsym{\zeta}\vert^2\psi_2^{'}\left(\sigma\vert\bsym{\zeta}\vert^2\right)\right]}{m}
\end{array}\right.
\end{align*}}%
with $\sigma$ solution of $\mathbb{E}\left[\psi_{2}\left(\sigma\vert\bsym{\zeta}\vert^2\right)\right] = m$ in which $\bsym{\zeta}\sim\mathbb{C}\mathcal{ES}_m\left(\vecteur{0}_{m},\vecteur{I}_{m},g_{\vecteur{z}}\right)$.
\end{theorem}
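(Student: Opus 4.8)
The plan is to run the standard $M$-estimation argument on the stacked parameter $\bsym{\theta}^T = \left[\vecteur{t}^T,\vec{}\left(\vecteur{M}\right)^T\right]$ introduced in the proof of Theorem~\ref{small_theorem}, carried out in the augmented (Wirtinger) complex formalism so that both the circular location limit and the non-circular scatter limit, together with their pseudo-covariances, are tracked. First I would exploit the affine equivariance of the joint $M$-estimators to reduce to the standardized case $\vecteur{t}_e = \vecteur{0}_m$, $\bsym{\Lambda} = \vecteur{I}_m$, so that $\vecteur{M}_e = \sigma^{-1}\vecteur{I}_m$ and $\vecteur{z}_1 - \vecteur{t}_e \overset{d}{=} \sqrt{\mathcal{Q}}\,\vecteur{u}$ with $\vecteur{u}$ uniform on the complex unit sphere, $\mathcal{Q}=\vert\bsym{\zeta}\vert^2$ the modular variate independent of $\vecteur{u}$, and $d = \sqrt{\sigma}\vert\bsym{\zeta}\vert$. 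The general statement then follows by mapping back, which only reinstates the $\vecteur{M}_e$ factors inside $\bsym{\Sigma}_t$ and $\bsym{\Sigma}_M$.

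Second, starting from $\bsym{\Psi}_N\big(\widehat{\bsym{\theta}}_N\big)=\vecteur{0}$, I would expand to first order about $\bsym{\theta}_e$,
\[
\vecteur{0} = \sqrt{N}\,\bsym{\Psi}_N\left(\bsym{\theta}_e\right) + \left[\frac{\partial \bsym{\Psi}_N}{\partial \bsym{\theta}}\right]_{\bar{\bsym{\theta}}}\sqrt{N}\big(\widehat{\bsym{\theta}}_N - \bsym{\theta}_e\big),
\]
with $\bar{\bsym{\theta}}$ on the segment joining $\widehat{\bsym{\theta}}_N$ and $\bsym{\theta}_e$. The consistency of Theorem~\ref{small_theorem}, together with the boundedness of $s\psi_i^{'}(s)$, lets me apply the SLLN to the Jacobian and a dominated-convergence bound to the remainder, yielding $[\partial\bsym{\Psi}_N/\partial\bsym{\theta}]_{\bar{\bsym{\theta}}}\overset{\mathbb{P}}{\rightarrow}\vecteur{B}\triangleq\mathbb{E}[\partial\bsym{\Psi}/\partial\bsym{\theta}]_{\bsym{\theta}_e}$. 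Hence $\sqrt{N}\big(\widehat{\bsym{\theta}}_N-\bsym{\theta}_e\big)\overset{d}{\rightarrow}-\vecteur{B}^{-1}\vecteur{g}$, where $\vecteur{g}=(\vecteur{g}_1,\vecteur{g}_2)$ is the Gaussian limit of $\sqrt{N}\,\bsym{\Psi}_N(\bsym{\theta}_e)$ provided by the CLT applied to the i.i.d. zero-mean summands. Since $\bsym{\Psi}_{1}$ is odd and $\bsym{\Psi}_{2}$ even in $\vecteur{z}_1-\vecteur{t}_e$, and the CES law is symmetric, every cross-moment and cross-derivative between the location and scatter blocks vanishes; thus both $\vecteur{B}$ and $\mathrm{Cov}(\vecteur{g})$ are block-diagonal, which is exactly what produces the asymptotic independence $\vecteur{a}\perp\vecteur{b}$.

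Third, for the location block the Jacobian collapses to a scalar multiple of the identity, $\vecteur{B}_{tt}=-\beta\,\vecteur{I}_m$: the $u_1$ contribution comes from the explicit factor $(\vecteur{z}-\vecteur{t})$ and the $\psi_1^{'}$ contribution from differentiating the argument $d$, the combination being dictated by the spherical averaging $\mathbb{E}[\vecteur{u}\vecteur{u}^H]=m^{-1}\vecteur{I}_m$ and the relation $s u_1^{'}(s)=\psi_1^{'}(s)-u_1(s)$, while the hypothesis $\mathbb{E}[\psi_1^{'}(\sqrt{\sigma}\vert\bsym{\zeta}\vert)]>0$ guarantees $\beta\neq 0$. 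Computing $\mathrm{Cov}(\vecteur{g}_1)=\alpha\,\vecteur{M}_e$ with $\alpha=m^{-1}\mathbb{E}[\psi_1^2]$, and using $\vecteur{a}=\beta^{-1}\vecteur{g}_1$, yields $\bsym{\Sigma}_t=\tfrac{\alpha}{\beta^2}\vecteur{M}_e$; the circularity of $\vecteur{a}$ (vanishing pseudo-covariance) follows from $\mathbb{E}[\vecteur{u}\vecteur{u}^T]=\vecteur{0}$.

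Fourth and hardest is the scatter block. I would compute the covariance and pseudo-covariance of $\vecteur{g}_2$ from the fourth-order moments of the complex-sphere-uniform $\vecteur{u}$, namely $\mathbb{E}[u_iu_j^*u_ku_l^*]=\tfrac{1}{m(m+1)}(\delta_{ij}\delta_{kl}+\delta_{il}\delta_{kj})$, which generate the $\vecteur{M}_e^T\otimes\vecteur{M}_e$ and $\vec{}(\vecteur{M}_e)\vec{}(\vecteur{M}_e)^H$ pieces weighted by $a_1$; the Hermitian symmetry of the scatter estimating matrix then forces the pseudo-covariance relation $\bsym{\Omega}_M=\bsym{\Sigma}_M\vecteur{K}_m$ through $\vec{}(\vecteur{A}^H)=\vecteur{K}_m\vec{}(\vecteur{A}^*)$. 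The augmented differentiation of $\bsym{\Psi}_2$ produces a structured operator of the form $\lambda_1(\vecteur{M}_e^T\otimes\vecteur{M}_e)+\lambda_2\,\vec{}(\vecteur{M}_e)\vec{}(\vecteur{M}_e)^H$, with $\lambda_i$ depending on $a_2$, which I would invert by a Sherman--Morrison identity; propagating this inverse through $\vecteur{B}_{MM}^{-1}\mathrm{Cov}(\vecteur{g}_2)\vecteur{B}_{MM}^{-H}$ is precisely what generates the $(a_2+m)^2$ denominators and the closed forms of $\sigma_1$ and $\sigma_2$. The main obstacle is keeping this complex matrix-calculus bookkeeping consistent, so that the augmented derivatives, the commutation matrix $\vecteur{K}_m$, and the Sherman--Morrison inversion collapse to the stated constants.
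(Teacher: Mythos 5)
Your route is sound but genuinely different from the paper's. You run the classical Z-estimator/delta-method argument directly on the stacked complex parameter: Taylor-expand $\bsym{\Psi}_N$ about $\bsym{\theta}_e$ in Wirtinger form, send the Jacobian to its expectation by the SLLN, apply the CLT to the score, and read off $\bsym{\Sigma}_t$ and $\bsym{\Sigma}_M$ from complex-sphere moments plus a Sherman--Morrison inversion of the scatter-block Jacobian; block-diagonality by parity gives $\vecteur{a}\perp\vecteur{b}$. The paper instead transports the whole problem to $\mathbb{R}^{2m}$ via $h$ and $f$, performs the perturbation analysis there on the two real $M$-estimators built from $\vecteur{u}_n=h(\vecteur{z}_n)$ and $\vecteur{v}_n=\bsym{\mathcal{P}}\vecteur{u}_n$, proves a dedicated lemma (Lemma \ref{lemma1}) identifying the asymptotic law of $\mathbf{\widehat{M}}_{N}^{\mathbb{R}}$ with that of $\tfrac{1}{2}\big(\mathbf{\widehat{M}}_{N}^{u}+\mathbf{\widehat{M}}_{N}^{v}\big)$ --- necessary because $f(\mathbf{\widehat{M}}_N)$ solves the mixed system \eqref{eq:scatter_emp_real} rather than the defining system of $\mathbf{\widehat{M}}_{N}^{u}$ alone --- and only then pulls back through $\vecteur{g}_m$. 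The real-embedding route buys the ability to reuse the established real-case constants and expansions of \cite{Mar76,BB08,Tyl82,MPFO13} and avoids non-holomorphic differentiation entirely, at the price of doubling the dimension and the $u/v$ symmetrization; your route is more self-contained and stays in dimension $m$, but must redo every moment computation on the complex sphere and keep the augmented calculus consistent.

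The one place you should be explicit is the inversion step $\sqrt{N}\big(\widehat{\bsym{\theta}}_N-\bsym{\theta}_e\big)\overset{d}{\rightarrow}-\vecteur{B}^{-1}\vecteur{g}$: since $\bsym{\Psi}_N$ is not holomorphic in $\bsym{\theta}$, the expansion necessarily couples $\bsym{\theta}$ and $\bsym{\theta}^{*}$, and for the scatter block the Hermitian constraint makes $\textnormal{\vec{}}\big(\mathbf{\widehat{M}}_N\big)^{*}=\vecteur{K}_m\textnormal{\vec{}}\big(\mathbf{\widehat{M}}_N\big)$, so the augmented linear system is degenerate unless you eliminate the conjugate block using exactly this $\vecteur{K}_m$ relation (the same identity that produces $\bsym{\Omega}_M=\bsym{\Sigma}_M\vecteur{K}_m$). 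You flag this bookkeeping as the main obstacle, and it is; handled carefully it reproduces the stated $\alpha$, $\beta$, $\sigma_1$, $\sigma_2$, so I see no fatal gap, only a step that must not be glossed over.
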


\subsection{Proof of Theorem \ref{main_theorem}}
The starting point of the proof is to map the complex joint $M$-estimators $\big(\mathbf{\hat{t}}_N,\mathbf{\widehat{M}}_N\big)$ into real-ones, then to study the asymptotic behavior of the latter, and finally to relate the latter to the asymptotic distribution of the complex joint $M$-estimators $\big(\mathbf{\hat{t}}_N,\mathbf{\widehat{M}}_N\big)$.
\subsubsection{Complex vector space isomorphism}
Let us first introduce \bm{functions $h:\mathbb{C}^{m}\rightarrow\mathbb{R}^{p}$ and $f:\mathbb{C}^{m\times m}\rightarrow\mathbb{R}^{p\times p}$ with $p=2m$ defined by {\small $h\left(\vecteur{a}\right) = \left(\Re\left(\vecteur{a}\right)^T,\Im\left(\vecteur{a}\right)^T\right)^T$} and} {\footnotesize \begin{align*}
f\left(\vecteur{A}\right) = \dfrac{1}{2}\begin{pmatrix}
\Re\left(\vecteur{A}\right) & -\Im\left(\vecteur{A}\right) \\ 
\Im\left(\vecteur{A}\right) & \Re\left(\vecteur{A}\right)
\end{pmatrix} 
\end{align*}}%
In addition, let $\bsym{\mathcal{P}}\in\mathbb{R}^{p\times p}$ be the matrix {\footnotesize $\bsym{\mathcal{P}} = \begin{pmatrix}
\vecteur{0}_{m\times m} & -\vecteur{I}_m \\ 
\vecteur{I}_m & \vecteur{0}_{m\times m}
\end{pmatrix}$}. Some useful properties of the previous functions are given in \cite{MPFO13}. Furthermore, we set {\small $\mathbf{\hat{t}}_{N}^{\mathbb{R}} = h\left(\mathbf{\hat{t}}_N\right)$, $\mathbf{\widehat{M}}_{N}^{\mathbb{R}} = f\left(\mathbf{\widehat{M}}_N\right)$, $\vecteur{M}_{\mathbb{R}} = f\left(\vecteur{M}_e\right)$ and $\vecteur{t}_{\mathbb{R}} = h\left(\vecteur{t}_e\right)$}. In addition, let us define $\vecteur{u}_n = h\left(\vecteur{z}_n\right)\sim\mathcal{ES}_p\left(\vecteur{t}_{e}^{u},\bsym{\Lambda}_{\mathbb{R}},g_{\vecteur{z}}\right)$ and $\vecteur{v}_n =\bsym{\mathcal{P}}\vecteur{u}_n\sim\mathcal{ES}_p\left(\vecteur{t}_{e}^{v},\bsym{\Lambda}_{\mathbb{R}},g_{\vecteur{z}}\right)$ for $\vecteur{z}_n\sim\mathbb{C}\mathcal{ES}_m\left(\vecteur{t}_e,\bsym{\Lambda},g_{\vecteur{z}}\right)$, where $\vecteur{t}_{e}^{u} = h\left(\vecteur{t}_{e}\right)$, $\vecteur{t}_{e}^{v}= \bsym{\mathcal{P}}\vecteur{t}_{e}^{u}$ and $\bsym{\Lambda}_{\mathbb{R}} = f\left(\bsym{\Lambda}\right)$. The notation $\mathcal{ES}$ refers to real elliptical distributions \cite{Fra04}. Moreover, there exist another relation between the vectors $\vecteur{z}_n$, $\vecteur{u}_n$ and $\vecteur{v}_n$, for any Hermitian matrix, $\vecteur{A}\in\mathbb{C}^{m\times m}$ \cite{MPFO13}:
{\small \begin{equation}
\label{eq:distance_real}
2\vecteur{z}_n^H\vecteur{A}^{-1}\vecteur{z}_n = \vecteur{u}_n^Tf\left(\vecteur{A}\right)^{-1}\vecteur{u}_n = \vecteur{v}_n^Tf\left(\vecteur{A}\right)^{-1}\vecteur{v}_n
\end{equation}}%
Let us apply the function $f(\cdot)$ to the equation \eqref{eq:scatter_emp_comp} (respectively $h(\cdot)$ to \eqref{eq:moy_emp_comp}), we obtain
{\footnotesize \begin{align}
%\mathbf{\widehat{M}}_{N}^{\mathbb{R}} &= \dfrac{1}{2N}\sum\limits_{n=1}^N u_2(d^2\left(\vecteur{z}_n,\mathbf{\hat{t}}_N;\mathbf{\widehat{M}}_N^{-1}\right))\left[\left(\vecteur{u}_n-\mathbf{\hat{t}}_{N}^{\mathbb{R}}\right)\left(\vecteur{u}_n-\mathbf{\hat{t}}_{N}^{\mathbb{R}}\right)^T +\left(\vecteur{v}_n-\bsym{\mathcal{P}}\mathbf{\hat{t}}_{N}^{\mathbb{R}}\right)\left(\vecteur{v}_n-\bsym{\mathcal{P}}\mathbf{\hat{t}}_{N}^{\mathbb{R}}\right)^T \right]\nonumber\\
&\mathbf{\widehat{M}}_{N}^{\mathbb{R}} = \dfrac{1}{2N}\sum\limits_{n=1}^N u_{2,\mathbb{R}}\left(d^2\left(\vecteur{u}_n,\mathbf{\hat{t}}_{N}^{\mathbb{R}};\mathbf{\widehat{M}}_{N}^{\mathbb{R}}\right)\right)\left(\vecteur{u}_n-\mathbf{\hat{t}}_{N}^{\mathbb{R}}\right)\left(\vecteur{u}_n-\mathbf{\hat{t}}_{N}^{\mathbb{R}}\right)^T \label{eq:scatter_emp_real}\\
&\,+ \dfrac{1}{2N}\sum\limits_{n=1}^N u_{2,\mathbb{R}}\left(d^2\left(\vecteur{v}_n,\bsym{\mathcal{P}}\mathbf{\hat{t}}_{N}^{\mathbb{R}};\mathbf{\widehat{M}}_{N}^{\mathbb{R}}\right)\right)\left(\vecteur{v}_n-\bsym{\mathcal{P}}\mathbf{\hat{t}}_{N}^{\mathbb{R}}\right)\left(\vecteur{v}_n-\bsym{\mathcal{P}}\mathbf{\hat{t}}_{N}^{\mathbb{R}}\right)^T \nonumber
\end{align}}%
{\footnotesize\vspace*{-0.5cm}
\begin{align}
{\normalsize \text{and }}\,\dfrac{1}{N}\sum\limits_{n=1}^N u_{1,\mathbb{R}}\left(d\left(\vecteur{u}_n,\mathbf{\hat{t}}_{N}^{\mathbb{R}};\mathbf{\widehat{M}}_{N}^{\mathbb{R}}\right)\right)\left(\vecteur{u}_n-\mathbf{\hat{t}}_{N}^{\mathbb{R}}\right)=\vecteur{0} \label{eq:moy_emp_real1}
%\dfrac{1}{N}\sum\limits_{n=1}^N u_1\left(d\left(\vecteur{z}_n,\mathbf{\hat{t}}_N;\mathbf{\widehat{M}}_N\right)\right)\left(\vecteur{u}_n-\mathbf{\hat{t}}_{N}^{\mathbb{R}}\right) =
\end{align}}%
where $u_{2,\mathbb{R}}(s) = u_2\left(2^{-1}s\right)$ and $u_{1,\mathbb{R}}(s) = u_1\left(2^{-1/2}s\right)$ according to \eqref{eq:distance_real}. Let $\psi_{i,\mathbb{R}}(\cdot)$ be the functions related to $u_{i,\mathbb{R}}(\cdot)$ by $\psi_{i,\mathbb{R}}(s) = su_{i,\mathbb{R}}(s),\,i=1,2$. Finally, we introduce the two following real joint $M$-estimators $\big(\mathbf{\hat{t}}_{N}^{u},\mathbf{\widehat{M}}_{N}^{u}\big)$ and $\big(\mathbf{\hat{t}}_{N}^{v},\mathbf{\widehat{M}}_{N}^{v}\big)$ respectively solution of $\mathcal{S}\mathit{ys}_N\left(\vecteur{U}_N,u_{1,\mathbb{R}},u_{2,\mathbb{R}}\right)$ and $\mathcal{S}\mathit{ys}_N\left(\vecteur{V}_N,u_{1,\mathbb{R}},u_{2,\mathbb{R}}\right)$.
From the results in the real case on the consistency \cite{Mar76,BB08}, we obtain 
{\footnotesize \begin{align}
\left\lbrace\begin{array}{l}
\big(\mathbf{\hat{t}}_{N}^{u},\mathbf{\widehat{M}}_{N}^{u}\big)\overset{\mathbb{P}}{\rightarrow}\left(\vecteur{t}_{u},\vecteur{M}_{u}\right) = \left(\vecteur{t}_{e}^{u},\sigma_{\mathbb{R}}^{-1}\bsym{\Lambda}_{\mathbb{R}}\right)\\
\big(\mathbf{\hat{t}}_{N}^{v},\mathbf{\widehat{M}}_{N}^{v}\big)\overset{\mathbb{P}}{\rightarrow}\left(\vecteur{t}_{v},\vecteur{M}_{v}\right) =\left(\vecteur{t}_{e}^{v},\sigma_{\mathbb{R}}^{-1}\bsym{\Lambda}_{\mathbb{R}}\right)
\end{array}\right.
\end{align}}%
in which $\left(\vecteur{t}_{u},\vecteur{M}_{u}\right)$ and $\left(\vecteur{t}_{v},\vecteur{M}_{v}\right)$ are solutions of  $\mathcal{S}\mathit{ys}_{\infty}\left(\vecteur{u}_1,u_{1,\mathbb{R}},u_{2,\mathbb{R}}\right)$ and $\mathcal{S}\mathit{ys}_{\infty}\left(\vecteur{v}_1,u_{1,\mathbb{R}},u_{2,\mathbb{R}}\right)$ and $\sigma_{\mathbb{R}}$ is the solution of $\mathbb{E}\left[\psi_{2,\mathbb{R}}\left(\sigma_{\mathbb{R}}\vert\vecteur{u}\vert^2\right)\right] = p = 2m$, $\vecteur{u}\sim\mathcal{ES}_p\left(\vecteur{0},\vecteur{I}_p,g_{\vecteur{z}}\right)$. Thus, we have $\vecteur{M}_{u} = \sigma_{\mathbb{R}}^{-1}\bsym{\Lambda}_{\mathbb{R}} = \vecteur{M}_{v}$.\\
Finally, by applying $\bsym{\mathcal{P}}$ to the system $\mathcal{S}\mathit{ys}_N\left(\vecteur{U}_N,u_{1,\mathbb{R}},u_{2,\mathbb{R}}\right)$, we obtain $\mathbf{\widehat{M}}_{N}^{v} = \bsym{\mathcal{P}}\mathbf{\widehat{M}}_{N}^{u}\bsym{\mathcal{P}}^T$ and $\mathbf{\hat{t}}_{N}^{v} = \bsym{\mathcal{P}}\,\mathbf{\hat{t}}_{N}^{u}$. Moreover, since $\vecteur{v}_n=\bsym{\mathcal{P}}\vecteur{u}_n,\,\forall\,n$, we have $\vecteur{M}_u = \vecteur{M}_v = \bsym{\mathcal{P}}\vecteur{M}_u\bsym{\mathcal{P}}^T$ and $\vecteur{t}_v = \bsym{\mathcal{P}}\vecteur{t}_u$.
\subsubsection{Link between asymptotic behaviors of {\small $\big(\mathbf{\hat{t}}_{N}^{u},\mathbf{\widehat{M}}_{N}^{u}\big)$}, {\small $\big(\mathbf{\hat{t}}_{N}^{v},\mathbf{\widehat{M}}_{N}^{v}\big)$} and {\small $\big(\mathbf{\hat{t}}_{N}^{\mathbb{R}},\mathbf{\widehat{M}}_{N}^{\mathbb{R}}\big)$}}
\begin{lemma}
\label{lemma1}
{\small $\mathbf{\hat{t}}_{N}^{\mathbb{R}}$} and {\small $\mathbf{\hat{t}}_{N}^{u}$} (respectively {\small $\mathbf{\widehat{M}}_{N}^{\mathbb{R}}$} and {\footnotesize $\dfrac{1}{2}\big(\mathbf{\widehat{M}}_{N}^{u}+\mathbf{\widehat{M}}_{N}^{v}\big)$}) \bm{share} the same asymptotic Gaussian law. 
\end{lemma}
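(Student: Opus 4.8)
The plan is to exploit the fact that, for every $N$, the real representations $\big(\mathbf{\hat{t}}_{N}^{\mathbb{R}},\mathbf{\widehat{M}}_{N}^{\mathbb{R}}\big)$ solve the \emph{pooled} system \eqref{eq:scatter_emp_real}--\eqref{eq:moy_emp_real1}, whose scatter equation is exactly the arithmetic mean of a $u$-type and a $v$-type scatter functional evaluated at one common argument. The backbone of the argument is a symmetry observation: since $\bsym{\mathcal{P}}$ is orthogonal and $\bsym{\mathcal{P}}f(\cdot)=f(\cdot)\bsym{\mathcal{P}}$ \cite{MPFO13}, one gets $\bsym{\mathcal{P}}\bsym{\Lambda}_{\mathbb{R}}\bsym{\mathcal{P}}^T=\bsym{\Lambda}_{\mathbb{R}}$ and $\bsym{\mathcal{P}}\vecteur{M}_{u}\bsym{\mathcal{P}}^T=\vecteur{M}_{u}$, so that the centered vectors $\vecteur{u}_n-\vecteur{t}_e^{u}$ and $\vecteur{v}_n-\vecteur{t}_e^{v}=\bsym{\mathcal{P}}(\vecteur{u}_n-\vecteur{t}_e^{u})$ are identically distributed. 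Consequently all population quantities attached to the $u$- and $v$-systems coincide: not only the limits $\vecteur{M}_{u}=\vecteur{M}_{v}$ (already noted), but also the population Jacobians of their estimating functions at the common root.

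First I would invoke the classical $Z$-estimator expansion \cite{Van00}: stacking $\bsym{\theta}=\big(\vecteur{t},\vec{}\big(\vecteur{M}\big)\big)$, any consistent root of a system $\bsym{\Psi}_N(\bsym{\theta})=\vecteur{0}$ with population zero $\bsym{\theta}_0$ satisfies $\sqrt{N}\big(\widehat{\bsym{\theta}}_N-\bsym{\theta}_0\big)=-\dot{\bsym{\Psi}}(\bsym{\theta}_0)^{-1}\sqrt{N}\bsym{\Psi}_N(\bsym{\theta}_0)+o_{\mathbb{P}}(1)$, the boundedness of $s\psi_i'(s)$ supplying the regularity and uniform control required. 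I would then establish the \emph{decoupling} of location and scatter: at the centered symmetric root the off-diagonal blocks $\partial_{\vecteur{t}}\bsym{\Psi}_{2}$ and $\partial_{\vecteur{M}}\bsym{\Psi}_{1}$ vanish, because their integrands are odd functions of $\vecteur{u}-\vecteur{t}_{u}$ integrated against a symmetric law, so $\dot{\bsym{\Psi}}(\bsym{\theta}_0)$ is block diagonal and the two fluctuations separate to first order.

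For the location part, equation \eqref{eq:moy_emp_real1} satisfied by $\mathbf{\hat{t}}_{N}^{\mathbb{R}}$ is term-by-term the same location estimating function (same data $\vecteur{u}_n$, same $u_{1,\mathbb{R}}$) as that defining $\mathbf{\hat{t}}_{N}^{u}$, differing only through the plugged-in scatter, which in both cases tends to $\vecteur{M}_{u}$. By block-diagonality the location expansion depends on the scatter only through its probability limit, so the two scores and the two location--location Jacobian blocks coincide and the common asymptotic Gaussian law follows. For the scatter part, the decoupling lets me freeze the location at $\vecteur{t}_{u}$ and linearize the three scatter equations around $\vecteur{M}_{u}$; writing $\vecteur{J}$ for the Jacobian of the scatter functional, shared by the $u$- and $v$-systems by the symmetry above, and $\vecteur{Q}_N^{u},\vecteur{Q}_N^{v}$ for their centered empirical scatter scores, I obtain
\[
\sqrt{N}\,\vec{}\big(\mathbf{\widehat{M}}_{N}^{\mathbb{R}}-\vecteur{M}_{u}\big)=-(\vecteur{J}-\vecteur{I})^{-1}\tfrac{1}{2}\sqrt{N}\big(\vecteur{Q}_N^{u}+\vecteur{Q}_N^{v}\big)+o_{\mathbb{P}}(1),
\]
while each individual estimator obeys $\sqrt{N}\,\vec{}\big(\mathbf{\widehat{M}}_{N}^{u}-\vecteur{M}_{u}\big)=-(\vecteur{J}-\vecteur{I})^{-1}\sqrt{N}\vecteur{Q}_N^{u}+o_{\mathbb{P}}(1)$ and likewise for $v$. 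Averaging the last two and using the shared $\vecteur{J}$ reproduces exactly the expansion of $\mathbf{\widehat{M}}_{N}^{\mathbb{R}}$, whence $\mathbf{\widehat{M}}_{N}^{\mathbb{R}}$ and $\tfrac12\big(\mathbf{\widehat{M}}_{N}^{u}+\mathbf{\widehat{M}}_{N}^{v}\big)$ share the same asymptotic law.

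The main obstacle is twofold and both parts rest on symmetry. The first is justifying the decoupling, i.e. the vanishing of the off-diagonal Jacobian blocks, which needs the odd-moment cancellation under the centered elliptical law together with enough regularity to differentiate under the expectation. The second, more delicate, is proving that the $u$- and $v$-scatter functionals have the \emph{same} population Jacobian $\vecteur{J}$ at $\vecteur{M}_{u}$: this is precisely where the $\bsym{\mathcal{P}}$-invariance $\bsym{\mathcal{P}}\vecteur{M}_{u}\bsym{\mathcal{P}}^T=\vecteur{M}_{u}$ and the identical law of the centered $\vecteur{u}_n$ and $\vecteur{v}_n$ are indispensable, since without equal Jacobians the averaging step collapses. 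The remaining ingredients — the explicit Jacobian entries and the central limit theorem for the scores — are routine and would be deferred.
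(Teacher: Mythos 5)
Your proposal is correct and follows essentially the same route as the paper's appendix: a first-order (Z-estimator) expansion of the $u$-, $v$- and pooled systems around the common limit $\left(\vecteur{t}_u,\vecteur{M}_u\right)$, in which the cross Jacobian blocks vanish by elliptical symmetry (the paper's $\vecteur{B}_N,\vecteur{C}_N\overset{\mathbb{P}}{\rightarrow}\vecteur{0}$), the $u$- and $v$-scatter Jacobians coincide by $\bsym{\mathcal{P}}$-invariance (the paper's $\vecteur{A}_N+\left(\bsym{\mathcal{P}}\otimes\bsym{\mathcal{P}}\right)\vecteur{A}_N\left(\bsym{\mathcal{P}}\otimes\bsym{\mathcal{P}}\right)^T\overset{\mathbb{P}}{\rightarrow}2\vecteur{A}$), and the pooled score is the average of the $u$- and $v$-scores. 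The only difference is presentational: the paper carries out the linearization explicitly through \eqref{eq:dev_asymp_scatt_u}--\eqref{eq:D} and \eqref{eq:11}, whereas you invoke the abstract sandwich formula and defer the same moment computations.
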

\begin{proof}
See Appendix.
\end{proof}
\subsubsection{Asymptotic behavior of {\small $\big(\mathbf{\hat{t}}_N, \mathbf{\widehat{M}}_N\big)$}}
Since {\small $\vec{}\big(\mathbf{\widehat{M}}_{N}^{\mathbb{R}}\big)$} and $\mathbf{\hat{t}}_{N}^{\mathbb{R}}$ have an asymptotic Gaussian distribution according to Lemma \ref{lemma1}, and {\small $\vec{}\big(\mathbf{\widehat{M}}_N\big) = \left(\vecteur{g}_m^T\otimes\vecteur{g}_m^H\right)\vec{}\big(\mathbf{\widehat{M}}_{N}^{\mathbb{R}}\big)$} and $\mathbf{\hat{t}}_N = \vecteur{g}_m^H\mathbf{\hat{t}}_{N}^{\mathbb{R}}$ with {\small $\vecteur{g}_m = \left(\vecteur{I}_m,-j\vecteur{I}_m\right)^T$}. Consequently, $\mathbf{\hat{t}}_N$ and {\small $\vec{}\big(\mathbf{\widehat{M}}_N\big)$} have a non-circular complex Gaussian distribution \cite{Van95}. \bm{Additionally, using the same approach as in \cite{MPFO13}, we obtain}
{\footnotesize \begin{align}
&\quad\quad\sqrt{N}\vec{}\left(\mathbf{\widehat{M}}_N - \vecteur{M}_e\right) \overset{d}{\rightarrow} \vecteur{b}\sim\mathcal{G}\mathbb{C}\mathcal{N}\left(\vecteur{0}, \bsym{\Sigma}_M,\bsym{\Omega}_M\right)%\\
%&\text{ with }\left\lbrace\begin{array}{l}
%\bsym{\Sigma}_M = \sigma_1\vecteur{M}_e^T\otimes\vecteur{M}_e + \sigma_2\vec{}\left(\vecteur{M}_e\right)\vec{}\left(\vecteur{M}_e\right)^H\\
%\bsym{\Omega}_M = \bsym{\Sigma}_M\vecteur{K}_m 
%%= \sigma_1\left(\vecteur{M}_e^T\otimes\vecteur{M}_e\right)\vecteur{K}_m + \sigma_2\vec{}\left(\vecteur{M}_e\right)\vec{}\left(\vecteur{M}_e\right)^T
%\end{array}\right. \nonumber
\end{align}}%
Regarding the location estimate, we have
{\small 
\begin{align}
\sqrt{N}\left(\mathbf{\hat{t}}_N - \vecteur{t}_e\right) \overset{d}{\rightarrow} \mathcal{G}\mathbb{C}\mathcal{N}\left(\vecteur{0}, \bsym{\Sigma}_t,\bsym{\Omega}_t\right),\,\;\text{where}
\end{align}}% where
\vspace*{-0.5cm}
{\footnotesize \begin{align*}
\bsym{\Sigma}_t &= \vecteur{g}_m^HN\mathbb{E}\left[\left(\mathbf{\hat{t}}_{N}^{u} - \vecteur{t}_u\right)\left(\mathbf{\hat{t}}_{N}^{u} - \vecteur{t}_u\right)^T\right]\vecteur{g}_m \hspace*{-0.05cm}=  \dfrac{\alpha}{\beta^2}\vecteur{g}_m^H\vecteur{M}_{\mathbb{R}}\vecteur{g}_m \hspace*{-0.05cm}= \dfrac{\alpha}{\beta^2}\vecteur{M}_e
%= N\mathbb{E}\left[\left(\mathbf{\hat{t}}_N - \vecteur{t}_e\right)\left(\mathbf{\hat{t}}_N - \vecteur{t}_e\right)^H\right] 
\end{align*}}%
in which
{\footnotesize \begin{align*}
\left\lbrace\begin{array}{l}
\alpha = (2m)^{-1}\mathbb{E}\left[\psi_{1,\mathbb{R}}^2(\sqrt{\sigma}\vert\vecteur{x}\vert)\right] = m^{-1}\mathbb{E}\left[\psi_{1}^2(\sqrt{\sigma}\vert\bsym{\zeta}\vert)\right] \\
\beta = \mathbb{E}\left[\left(1-(2m)^{-1}\right)u_{1}(\sqrt{\sigma}\vert\bsym{\zeta}\vert)+(2m)^{-1}\psi_{1}^{'}\left(\sqrt{\sigma}\vert\bsym{\zeta}\vert\right)\right]
%= \mathbb{E}\left[\left(1-\dfrac{1}{2m}\right)u_{1,\mathbb{R}}(\sqrt{\sigma}\vert\vecteur{x}\vert)+\dfrac{1}{2m}\psi_{1,\mathbb{R}}^{'}\left(\sqrt{\sigma}\vert\vecteur{x}\vert\right)\right] 
\end{array}\right.
\end{align*}}%
with {\small $\vecteur{x}\sim\mathcal{ES}_p\left(\vecteur{0}_{2m},\vecteur{I}_{2m},g_{\vecteur{z}}\right)$} and {\small $\bsym{\zeta}\sim\mathbb{C}\mathcal{ES}_m\left(\vecteur{0}_{m},\vecteur{I}_{m},g_{\vecteur{z}}\right)$} hence $\vert\bsym{\zeta}\vert \overset{d}{=} 2^{-1/2}\vert\vecteur{x}\vert$. Furthermore, we have the relations $\psi_{1,\mathbb{R}}(s) = \sqrt{2}\psi_1\left(2^{-1/2}s\right)$ and $\psi_{1,\mathbb{R}}^{'}(s) = \psi_{1}^{'}\left(2^{-1/2}s\right)$. Moreover, we have
{\footnotesize \begin{align*}
\bsym{\Omega}_t &= \vecteur{g}_m^HN\mathbb{E}\left[\left(\mathbf{\hat{t}}_{N}^{\mathbb{R}} - \vecteur{t}_{\mathbb{R}}\right)\left(\mathbf{\hat{t}}_{N}^{\mathbb{R}} - \vecteur{t}_{\mathbb{R}}\right)^T\right]\vecteur{g}_m^{\ast} =  \dfrac{\alpha}{\beta^2}\vecteur{g}_m^H \vecteur{M}_{\mathbb{R}}\vecteur{g}_m^{\ast} =\vecteur{0}.
%= N\mathbb{E}\left[\left(\mathbf{\hat{t}}_N - \vecteur{t}_e\right)\left(\mathbf{\hat{t}}_N - \vecteur{t}_e\right)^T\right] 
\end{align*}}%
Thus, we prove that {\small $\sqrt{N}\left(\mathbf{\hat{t}}_N - \vecteur{t}_e\right) \overset{d}{\rightarrow} \vecteur{a}\sim\mathbb{C}\mathcal{N}\left(\vecteur{0}, \bsym{\Sigma}_t\right)$}. \bm{Applying} Lemma \ref{lemma1}, we have $\bsym{\omega}\perp\bsym{\chi}$ and consequently,
{\small \begin{align*}
\left\lbrace
\begin{array}{l}
\vecteur{a}\overset{d}{=}\vecteur{g}_m^H\vecteur{M}_{\mathbb{R}}^{1/2}\vecteur{D}
^{-1}\bsym{\chi}\\
\vecteur{b}\overset{d}{=}\dfrac{1}{2}\left(\vecteur{g}_m^T\otimes\vecteur{g}_m^H\right)\left(\vecteur{I}_{p^2}+\left(\bsym{\mathcal{P}}\otimes\bsym{\mathcal{P}}\right)\right)\big(\vecteur{M}_{\mathbb{R}}^{1/2}\otimes\vecteur{M}_{\mathbb{R}}^{1/2}\big)\vecteur{A}^{-1}\bsym{\omega}
\end{array}\right.,
\end{align*}}%
thus we prove that $\vecteur{a}\perp\vecteur{b}$.
\section{Simulations}
\label{sec::simu}
In order to illustrate our theoretical results, some simulations results are presented. Two scenarios have been considered for the simulations. For $m = 3$, the true location parameter is $\vecteur{t}_e = \left(1+0.5i,2+i,3+1.5i\right)^T$ and the true scatter matrix is $\bsym{\Lambda} = \vecteur{I}_m$, due to the affine equivariance propertie of the $M$-estimators, there is no loss of generality. 
\begin{itemize}
\item Case 1 : the data are generated under a $t$-distribution with $d = 4$ degrees of freedom \cite{BA13}.
\item Case 2 : the data are generated under a $K$-distribution with shape parameter $\nu = 4$ and scale parameter $\theta = 1/\nu$ \cite{OTKP12}.
\end{itemize}
The complex joint $M$-estimators is obtained with $u(s) \triangleq u_2(s) = \dfrac{d+m}{d+s} = u_1(\sqrt{s})$ and the reweighting algorithm of \cite{KT91}, whose convergence is established.\\
The first case coincides with the MLE unlike the case 2, which is a general complex joint $M$-estimator. 
\begin{figure}[h!]
	\centering
	% This file was created by matlab2tikz.
%
\definecolor{mycolor1}{rgb}{0.00000,0.44700,0.74100}%
\definecolor{mycolor2}{rgb}{0.85000,0.32500,0.09800}%
\begin{tikzpicture}

\begin{axis}[%
width=0.72\columnwidth,
height=0.4\columnwidth,
at={(0,0)},
scale only axis,
xmode=log,
xmin=5,
xmax=100,
xminorticks=true,
xlabel style={font=\color{white!15!black}},
xlabel={\small Number of samples $N$},
ymode=log,
ymin=0.026,
ymax=3.01,
yminorticks=true,
ylabel style={font=\color{white!15!black}},
ylabel={\scriptsize $\text{Tr}\left\lbrace\textbf{MSE}\right\rbrace$ (dB)},
axis background/.style={fill=white},
%xmajorgrids,
%xminorgrids,
%ymajorgrids,
%yminorgrids,
legend columns=2,
legend style={at = {(0,1.04)}, legend cell align=left, anchor = south west, align=left,font=\scriptsize, draw=white!15!white}
]
\addplot [dotted, color=black,line width=1.0pt]
  table[row sep=crcr]{%
5	2.22857142857143\\
6	1.85714285714286\\
8	1.39285714285714\\
10	1.11428571428571\\
12	0.928571428571428\\
15	0.742857142857143\\
18	0.619047619047619\\
22	0.506493506493506\\
28	0.397959183673469\\
34	0.327731092436975\\
42	0.26530612244898\\
53	0.210242587601078\\
65	0.171428571428571\\
81	0.137566137566138\\
100	0.111428571428571\\
};
\addlegendentry{$\text{Tr}\left(\bsym{\Sigma}_M\right)$ case 1}

\addplot [color=mycolor1, mark=star, mark options={solid, mycolor1}]
  table[row sep=crcr]{%
5	3.00353351510446\\
6	2.29182777870947\\
8	1.53412473705703\\
10	1.20309627300127\\
12	0.988571947191243\\
15	0.77507248925973\\
18	0.637389655597625\\
22	0.527858187338509\\
28	0.40948097151813\\
34	0.334952385919148\\
42	0.267157527466804\\
53	0.211926546435785\\
65	0.17177042433827\\
81	0.138928965759236\\
100	0.112247248452662\\
};
\addlegendentry{$\text{Tr}\left\lbrace\textbf{MSE}\left(\widehat{\vecteur{M}}_N\right)\right\rbrace$ case 1}

\addplot [dash dot,color=black,line width=1.0pt]
  table[row sep=crcr]{%
5	1.33955314926295\\
6	1.11629429105246\\
8	0.837220718289345\\
10	0.669776574631476\\
12	0.55814714552623\\
15	0.446517716420984\\
18	0.372098097017487\\
22	0.304443897559762\\
28	0.239205919511241\\
34	0.196993110185728\\
42	0.159470613007494\\
53	0.126372938609712\\
65	0.103042549943304\\
81	0.0826884660038859\\
100	0.0669776574631476\\
};
\addlegendentry{$\text{Tr}\left(\bsym{\Sigma}_M\right)$ case 2}

\addplot [color=mycolor2, mark=star, mark options={solid, mycolor2}]
  table[row sep=crcr]{%
5	1.54115065411401\\
6	1.17503473746394\\
8	0.880049487494331\\
10	0.688776023995516\\
12	0.570669020244507\\
15	0.459467474078086\\
18	0.374095905747671\\
22	0.308984996725634\\
28	0.242765064219452\\
34	0.197420255783464\\
42	0.159317722686245\\
53	0.125520733843345\\
65	0.103828155438946\\
81	0.0827598561612788\\
100	0.0664890141865945\\
};
\addlegendentry{$\text{Tr}\left\lbrace\textbf{MSE}\left(\widehat{\vecteur{M}}_N\right)\right\rbrace$ case 2}

%%%%%%%%%%%%%%%%%%%%%%%%%%%%%%%%%%%%%%%%%%%%%%%%%%%%%%%%%%%%%%%%%%

\addplot [color=black,line width=1.0pt]
  table[row sep=crcr]{%
5	0.685714285714286\\
6	0.571428571428571\\
8	0.428571428571429\\
10	0.342857142857143\\
12	0.285714285714286\\
15	0.228571428571429\\
18	0.19047619047619\\
22	0.155844155844156\\
28	0.122448979591837\\
34	0.100840336134454\\
42	0.0816326530612245\\
53	0.0646900269541779\\
65	0.0527472527472527\\
81	0.0423280423280423\\
100	0.0342857142857143\\
};
\addlegendentry{$\text{Tr}\left(\bsym{\Sigma}_t\right)$ case 1}

\addplot [color=mycolor1, mark=+, mark options={solid, mycolor1}]
  table[row sep=crcr]{%
5	0.764270972853998\\
6	0.622684480831601\\
8	0.452315585832244\\
10	0.359470270901501\\
12	0.298019575183577\\
15	0.236790547078673\\
18	0.193302772077247\\
22	0.157072541656924\\
28	0.122466408387314\\
34	0.101614611752068\\
42	0.0832696967621809\\
53	0.0653077599081408\\
65	0.0529145210154537\\
81	0.0426052714347257\\
100	0.0341595250986598\\
};
\addlegendentry{$\text{Tr}\left\lbrace\textbf{MSE}\left(\widehat{\vecteur{t}}_N\right)\right\rbrace$ case 1}

\addplot [dashed,color=black,line width=1.0pt]
  table[row sep=crcr]{%
5	0.534647858131405\\
6	0.445539881776171\\
8	0.334154911332128\\
10	0.267323929065703\\
12	0.222769940888086\\
15	0.178215952710468\\
18	0.14851329392539\\
22	0.121510876848047\\
28	0.0954728318091795\\
34	0.0786246850193243\\
42	0.063648554539453\\
53	0.0504384771822081\\
65	0.0411267583178004\\
81	0.0330029542056423\\
100	0.0267323929065703\\
};
\addlegendentry{$\text{Tr}\left(\bsym{\Sigma}_t\right)$ case 2}

\addplot [color=mycolor2, mark=+, mark options={solid, mycolor2}]
  table[row sep=crcr]{%
5	0.582614602543084\\
6	0.477986816999777\\
8	0.351303855034106\\
10	0.277830422897558\\
12	0.23218871972619\\
15	0.182891144646239\\
18	0.153297987593649\\
22	0.123636251790481\\
28	0.0964690402322863\\
34	0.0788381947473233\\
42	0.0641623922837734\\
53	0.0514240026858274\\
65	0.0416309796996717\\
81	0.0332046321391441\\
100	0.0267681902480616\\
};
\addlegendentry{$\text{Tr}\left\lbrace\textbf{MSE}\left(\widehat{\vecteur{t}}_N\right)\right\rbrace$ case 2}
\end{axis}
\end{tikzpicture}%
	\caption{Second order moment simulations}
	\label{efficiency_simu}
\end{figure}
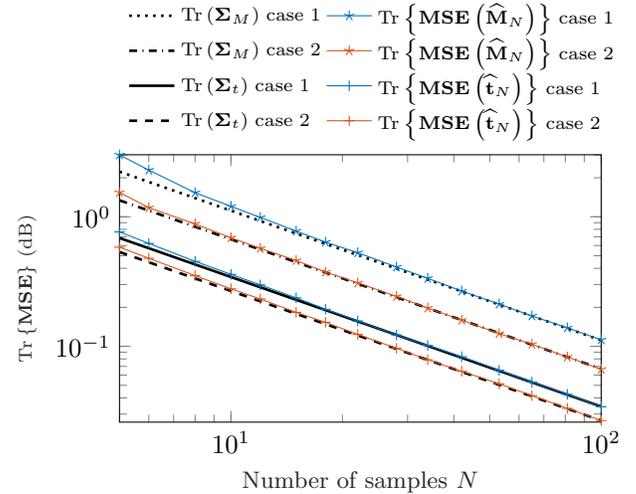

In Fig. \ref{efficiency_simu}, we plot the trace of the Mean Squared Error of the estimates of the location and the scatter matrix as well as the trace of the theoretical asymptotic covariance matrices $\bsym{\Sigma}_t$ and $\bsym{\Sigma}_M$. The results are validated since the drawn quantities are identical when $N\rightarrow\infty$. Moreover these quantities tend asymptotically to zero, which illustrate the consistency. 

\section{Conclusion}
In this letter, we established the asymptotic performance of the joint $M$-estimators for the complex multivariate location and scatter matrix. This statistical study highlights a better understanding on the performance of the $M$-estimators of \bm{the complex multivariate location and scatter matrix}. Again, \bm{the obtained results} may be used for conducting a performance analysis of adaptive processes involving non-zero mean observations.

% if have a single appendix:
%\appendix[Proof of the Zonklar Equations]
% or
%\appendix  % for no appendix heading
% do not use \section anymore after \appendix, only \section*
% is possibly needed

% use appendices with more than one appendix
% then use \section to start each appendix
% you must declare a \section before using any
% \subsection or using \label (\appendices by itself
% starts a section numbered zero.)
%
%%%%%%%%%%%%%%%%%%%%%%%%%%%%%%%%%%%%%%%%%%%%%%%%%%%%%%%%%%%%%%%%%%%%%%%
{\begin{table*}[h!]\scriptsize
\captionof{table}{\bm{Equations for the proof of Lemma \ref{lemma1}}}
\begin{align}
\label{eq:dev_asymp_scatt_u}
\vecteur{I}_{p}+\Delta\vecteur{M}_u &= \dfrac{1}{N}\textstyle\sum\limits_{n=1}^N a_n\left(\vecteur{k}_n\vecteur{k}_n^T-\vecteur{k}_n\Delta\vecteur{t}_u^T\vecteur{M}_u^{-1/2} - \vecteur{M}_u^{-1/2}\Delta\vecteur{t}_u\vecteur{k}_n^T\right) + b_n\left[\vecteur{k}_n^T\Delta\vecteur{M}_u\vecteur{k}_n + 2\vecteur{k}_n^T\vecteur{M}_u^{-1/2}\Delta\vecteur{t}_u\right]\vecteur{k}_n\vecteur{k}_n^T\\
\label{eq:dev_asymp_moy_u}
\vecteur{0} &= \dfrac{1}{N}\textstyle\sum\limits_{n=1}^N c_n\left(\vecteur{k}_n-\vecteur{M}_u^{-1/2}\Delta\vecteur{t}_u\right) + \dfrac{1}{N}\sum\limits_{n=1}^N d_n\big[\dfrac{\vecteur{k}_n^T\Delta\vecteur{M}_u\vecteur{k}_n}{2\vert\vecteur{k}_n\vert} + \dfrac{\vecteur{k}_n^T\vecteur{M}_u^{-1/2}\Delta\vecteur{t}_u}{\vert\vecteur{k}_n\vert}\big]\vecteur{k}_n\\
\label{eq:A}
\vecteur{A}_N&=\vecteur{I}_{p^2}-\dfrac{1}{N}\textstyle\sum\limits_{n=1}^N b_n\left(\vecteur{k}_n\otimes\vecteur{k}_n\right)\left(\vecteur{k}_n\otimes\vecteur{k}_n\right)^T \overset{\mathbb{P}}{\rightarrow} \vecteur{A} = \vecteur{I}_{p^2}+\dfrac{1}{p(p+2)}\mathbb{E}\left[\vert\vecteur{k}_1\vert^4u_{2,\mathbb{R}}^{'}\left(\vert\vecteur{k}_1\vert^2\right)\right]\left(\vecteur{I}_{p^2}+\vecteur{K}_p+\vec{}\left(\vecteur{I}_p\right)\vec{}\left(\vecteur{I}_p\right)^T\right)\\
\vecteur{B}_N&=\dfrac{1}{N}\textstyle\sum\limits_{n=1}^N \left(\vecteur{k}_n\otimes\vecteur{I}_p\right)\left(a_n\vecteur{I}_{p} - b_n\vecteur{k}_n\vecteur{k}_n^T\right) \overset{\mathbb{P}}{\rightarrow} \mathbb{E}\left[\vert\vecteur{k}_1\vert u_{2,\mathbb{R}}\left(\vert\vecteur{k}_1\vert^2\right)\right]\mathbb{E}\left[\left(\bsym{\kappa}\otimes\vecteur{I}_p\right)\right] + \mathbb{E}\left[\vert\vecteur{k}_1\vert^3u_{2,\mathbb{R}}^{'}\left(\vert\vecteur{k}_1\vert^2\right)\right]\mathbb{E}\left[\left(\bsym{\kappa}\otimes\vecteur{I}_p\right)\bsym{\kappa}\bsym{\kappa}^T\right]\hspace*{-0.1cm} = \vecteur{0}\\
\vecteur{C}_N&=-\dfrac{1}{N}\textstyle\sum\limits_{n=1}^N d_n\dfrac{\vecteur{k}_n\vecteur{k}_n^T}{2\vert\vecteur{k}_n\vert}\left(\vecteur{I}_p\otimes\vecteur{k}_n\right)^T \hspace*{-0.1cm}\overset{\mathbb{P}}{\rightarrow} \mathbb{E}\big[u_{1,\mathbb{R}}^{'}\left(\vert\vecteur{k}_1\vert\right)\dfrac{\vecteur{k}_1\vecteur{k}_1^T}{2\vert\vecteur{k}_1\vert}\left(\vecteur{I}_p\otimes\vecteur{k}_1\right)^T\big] = \dfrac{1}{2}\mathbb{E}\left[\vert\vecteur{k}_1\vert^2u_{1,\mathbb{R}}^{'}\left(\vert\vecteur{k}_1\vert\right)\right]\mathbb{E}\left[\bsym{\kappa}\bsym{\kappa}^T\left(\vecteur{I}_p\otimes\bsym{\kappa}\right)\right]= \vecteur{0}\\
\label{eq:D}
\vecteur{D}_N &= \dfrac{1}{N}\textstyle\sum\limits_{n=1}^N c_n\vecteur{I}_p-d_n\dfrac{\vecteur{k}_n\vecteur{k}_n^T}{\vert\vecteur{k}_n\vert} \overset{\mathbb{P}}{\rightarrow} \vecteur{D} = \left(\mathbb{E}\left[u_{1,\mathbb{R}}\left(\vert\vecteur{k}_1\vert\right)\right] + \dfrac{1}{p}\mathbb{E}\left[\vert\vecteur{k}_1\vert u_{1,\mathbb{R}}^{'}\left(\vert\vecteur{k}_1\vert\right)\right]\right)\vecteur{I}_p\\
%&\hspace{-1.7cm}\mathbb{E}\left[\left(u_{2,\mathbb{R}}\left(\vert\vecteur{k}_1\vert^2\right)\left(\vecteur{k}_1\otimes\vecteur{k}_1\right)-\vec{}\left(\vecteur{I}_p\right)\right)u_{1,\mathbb{R}}\left(\vert\vecteur{k}_1\vert\right)\vecteur{k}_1^T\right] = \mathbb{E}\left[\vert\vecteur{k}_1\vert^3u_{2,\mathbb{R}}\left(\vert\vecteur{k}_1\vert^2\right)u_{1,\mathbb{R}}\left(\vert\vecteur{k}_1\vert\right)\right]\mathbb{E}\left[\left(\bsym{\kappa}\otimes\bsym{\kappa}\right)\bsym{\kappa}^T\right] \nonumber\\
\label{eq:11}
&\hspace{-1.5cm}\left(\vecteur{I}_{p^2}+\left(\bsym{\mathcal{P}}\otimes\bsym{\mathcal{P}}\right)\right)\bsym{\omega}_N/\sqrt{N} = \left(\vecteur{A}_N + \left(\bsym{\mathcal{P}}\otimes\bsym{\mathcal{P}}\right)\vecteur{A}_N\left(\bsym{\mathcal{P}}\otimes\bsym{\mathcal{P}}\right)^T\right)\vec{}\left(\Delta\vecteur{M}_{\mathbb{R}}\right) +\left(\vecteur{I}_{p^2}+\vecteur{K}_p\right)\left(\vecteur{I}_{p^2}+\left(\bsym{\mathcal{P}}\otimes\bsym{\mathcal{P}}\right)\right)\vecteur{B}_N\vecteur{M}_{\mathbb{R}}^{-1/2}\Delta\vecteur{t}_{\mathbb{R}}\\
\mathbb{E}\left[\bsym{\omega}\bsym{\chi}^T\right] &= \mathbb{E}\left[\vert\vecteur{k}_1\vert^3u_{2,\mathbb{R}}\left(\vert\vecteur{k}_1\vert^2\right)u_{1,\mathbb{R}}\left(\vert\vecteur{k}_1\vert\right)\right]\mathbb{E}\left[\left(\bsym{\kappa}\otimes\bsym{\kappa}\right)\bsym{\kappa}^T\right]-\mathbb{E}\left[\vert\vecteur{k}_1\vert u_{2,\mathbb{R}}\left(\vert\vecteur{k}_1\vert^2\right)u_{1,\mathbb{R}}\left(\vert\vecteur{k}_1\vert\right)\right]\mathbb{E}\left[\vec{}\left(\vecteur{I}_p\right)\bsym{\kappa}^T\right] = \vecteur{0}
\label{eq:indep}
\end{align}
\end{table*}}%
%%%%%%%%%%%%%%%%%%%%%%%%%%%%%%%%%%%%%%%%%%%%%%%%%%%%%%%%%%%%%%%%%%%%%%%
\appendix[Proof of Lemma \ref{lemma1}]
\label{app::part_proof}
\subsection{Asymptotic behavior of {\small $\big(\mathbf{\hat{t}}_{N}^{u},\mathbf{\widehat{M}}_{N}^{u}\big)$} and {\small $\big(\mathbf{\hat{t}}_{N}^{v},\mathbf{\widehat{M}}_{N}^{v}\big)$}}
First of all, let us define {\small $\widehat{\vecteur{W}}_u = \vecteur{M}_u^{-1/2}\mathbf{\widehat{M}}_{N}^{u}\vecteur{M}_u^{-1/2}$} and {\small $\vecteur{k}_n = \vecteur{M}_u^{-1/2}\left(\vecteur{u}_n-\vecteur{t}_u\right)$}. Since {\small $\big(\mathbf{\hat{t}}_{N}^{u},\mathbf{\widehat{M}}_{N}^{u}\big)\overset{\mathbb{P}}{\rightarrow}\left(\vecteur{t}_{u},\vecteur{M}_{u}\right)$}, we can write for $N\rightarrow\infty$, {\small $\widehat{\vecteur{W}}_u = \vecteur{I}_{p} + \Delta\vecteur{M}_u$} and {\small $\mathbf{\hat{t}}_{N}^{u} = \vecteur{t}_u + \Delta\vecteur{t}_u$}.
Let us note {\small $a_n = u_{2,\mathbb{R}}\left(\vert\vecteur{k}_n\vert^2\right)$, $b_n = -u_{2,\mathbb{R}}^{'}\left(\vert\vecteur{k}_n\vert^2\right)$, $c_n = u_{1,\mathbb{R}}\left(\vert\vecteur{k}_n\vert\right)$} and {\small $d_n = -u_{1,\mathbb{R}}^{'}\left(\vert\vecteur{k}_n\vert\right)$} and use first order expansions for $N$ sufficiently large, then we obtain \eqref{eq:dev_asymp_scatt_u} and \eqref{eq:dev_asymp_moy_u} from $\mathcal{S}\mathit{ys}_N\left(\vecteur{U}_N,u_{1,\mathbb{R}},u_{2,\mathbb{R}}\right)$. By vectorizing \eqref{eq:dev_asymp_scatt_u} and after some calculus, we obtain
{\scriptsize \begin{align}
\label{eq:syst_dev_asymp_u}
\left\lbrace\begin{array}{l}
\bsym{\omega}_N = \vecteur{A}_N\sqrt{N}\vec{}\left(\Delta\vecteur{M}_u\right) + \left(\vecteur{I}_{p^2}+\vecteur{K}_p\right)\vecteur{B}_N\sqrt{N}\vecteur{M}_u^{-1/2}\Delta\vecteur{t}_u\\
\bsym{\chi}_N = \vecteur{C}_N\sqrt{N}\vec{}\left(\Delta\vecteur{M}_u\right) + \vecteur{D}_N\sqrt{N}\vecteur{M}_u^{-1/2}\Delta\vecteur{t}_u
\end{array}\right.
\end{align}}%
where {\scriptsize $\sqrt{N}\bsym{\omega}_N = \sum\limits_{n=1}^N a_n\left(\vecteur{k}_n\otimes\vecteur{k}_n\right)-\vec{}\left(\vecteur{I}_{p}\right)$} and {\scriptsize $\sqrt{N}\bsym{\chi}_N = \sum\limits_{n=1}^N c_n\vecteur{k}_n$}.
\begin{remark}
Note that {\footnotesize $\vecteur{k}_n\sim\mathcal{ES}_p\left(\vecteur{0},\sigma_{\mathbb{R}}\vecteur{I}_p,g_{\vecteur{z}}\right)$}. Let be {\footnotesize $\bsym{\kappa} = \dfrac{\vecteur{k}_1}{\vert\vecteur{k}_1\vert}$}, thus we have {\small $\bsym{\kappa}\perp\vert\vecteur{k}_1\vert$ and $\mathbb{E}\left[\bsym{\kappa}\right] = \vecteur{0}$, $\mathbb{E}\left[\bsym{\kappa}\bsym{\kappa}^T\right] = \dfrac{\vecteur{I}_p}{p}$}, all 3rd-order moments vanish and the only non-vanishing 4th-order moments are {\small $\mathbb{E}\left[\kappa_i^4\right] = \dfrac{3}{p(p+2)}$} and {\small $\mathbb{E}\left[\kappa_i^2\kappa_j^2\right]^{-1} = p(p+2)$} for $i\neq j$ where $\bsym{\kappa} = \left[\kappa_1,\ldots,\kappa_p\right]^T$.
\end{remark}
The SLLN yields to \eqref{eq:A}--\eqref{eq:D}.
Furthermore, since {\small $N^{-1}\sum\limits_{n=1}^N a_n\left(\vecteur{k}_n\otimes\vecteur{k}_n\right)\overset{\mathbb{P}}{\rightarrow} \vec{}\left(\vecteur{I}_p\right)$} 
and {\small $N^{-1}\sum\limits_{n=1}^N c_n\vecteur{k}_n\overset{\mathbb{P}}{\rightarrow}  \vecteur{0}_{p}$}, it yields from the central limit theorem that $\bsym{\omega}_N \overset{d}{\rightarrow} \bsym{\omega}$ and $\bsym{\chi}_N \overset{d}{\rightarrow} \bsym{\chi}$ with $\bsym{\omega}$ and $\bsym{\chi}$ zero-mean Gaussian distributed.
\bm{Applying} Slutsky's lemma \cite{MMY06}, it comes
{\footnotesize \begin{align*}
\sqrt{N}\left(\mathbf{\hat{t}}_{N}^{u}-\vecteur{t}_u\right) &\overset{d}{=} \sqrt{N}\Delta\vecteur{t}_u\overset{d}{\rightarrow}\vecteur{M}_u^{1/2}\vecteur{D}^{-1}\bsym{\chi}\\
\sqrt{N}\vec{}\left(\mathbf{\widehat{M}}_{N}^{u}-\vecteur{M}_u\right) &\overset{d}{\rightarrow}\big(\vecteur{M}_u^{1/2}\otimes\vecteur{M}_u^{1/2}\big)\vecteur{A}^{-1}\bsym{\omega}
%\overset{d}{=}\sqrt{N}\vec{}\big(\vecteur{M}_u^{1/2}\otimes\vecteur{M}_u^{1/2}\big)\left(\Delta\vecteur{M}_u\right)
\end{align*}}%
In the same way, we obtain
{\footnotesize \begin{align*}
&\sqrt{N}\vec{}\left(\mathbf{\widehat{M}}_{N}^{v}-\vecteur{M}_v\right)\overset{d}{\rightarrow}\left(\bsym{\mathcal{P}}\otimes\bsym{\mathcal{P}}\right)\big(\vecteur{M}_u^{1/2}\otimes\vecteur{M}_u^{1/2}\big)\vecteur{A}^{-1}\bsym{\omega}\\
&\sqrt{N}\left(\mathbf{\hat{t}}_{N}^{v}-\vecteur{t}_v\right) \overset{d}{\rightarrow}\bsym{\mathcal{P}}\vecteur{M}_u^{1/2}\vecteur{D}^{-1}\bsym{\chi}
\end{align*}}%
\subsection{Asymptotic behavior of {\small $\big(\mathbf{\hat{t}}_{N}^{\mathbb{R}},\mathbf{\widehat{M}}_{N}^{\mathbb{R}}\big)$}}
With the results of Theorem \ref{small_theorem}, the continuous mapping theorem implies
{\scriptsize \begin{align*}
&\big(\mathbf{\hat{t}}_{N}^{\mathbb{R}},\mathbf{\widehat{M}}_{N}^{\mathbb{R}}\big)=\left(h\left(\mathbf{\hat{t}}_N\right),f\left(\mathbf{\widehat{M}}_N\right)\right)\overset{\mathbb{P}}{\rightarrow}\left(h\left(\vecteur{t}_e\right),f\left(\vecteur{M}_e\right)\right)=\left(\vecteur{t}_{\mathbb{R}},\vecteur{M}_{\mathbb{R}}\right) \text{ \normalsize and } \\
&\left(h\left(\vecteur{t}_e\right),f\left(\vecteur{M}_e\right)\right) = \left(h\left(\vecteur{t}_e\right),\sigma^{-1}f\left(\bsym{\Lambda}\right)\right) = \left(h\left(\vecteur{t}_e\right),\sigma^{-1}\bsym{\Lambda}_{\mathbb{R}}\right) = \left(\vecteur{t}_u,\sigma_{\mathbb{R}}^{-1}\bsym{\Lambda}_{\mathbb{R}}\right)
\end{align*}}%
Let us define {\footnotesize $\widehat{\vecteur{W}}_{\mathbb{R}} = \vecteur{M}_{\mathbb{R}}^{-1/2}\mathbf{\widehat{M}}_{N}^{\mathbb{R}}\vecteur{M}_{\mathbb{R}}^{-1/2}$}. Since {\small $\vecteur{M}_{\mathbb{R}} = \vecteur{M}_{u}$} and {\small $\vecteur{t}_{\mathbb{R}} = \vecteur{t}_u$, $\vecteur{k}_n$} becomes {\small $\vecteur{k}_n = \vecteur{M}_{\mathbb{R}}^{-1/2}\left(\vecteur{u}_n-\vecteur{t}_{\mathbb{R}}\right)$} and satisfies {\small $\bsym{\mathcal{P}}\vecteur{k}_n = \vecteur{M}_{\mathbb{R}}^{-1/2}\left(\vecteur{v}_n-\bsym{\mathcal{P}}\vecteur{t}_{\mathbb{R}}\right)$}. For $N\rightarrow\infty$, we can write {\small $\widehat{\vecteur{W}}_{\mathbb{R}} = \vecteur{I}_{p} + \Delta\vecteur{M}_{\mathbb{R}}$} and {\small $\mathbf{\hat{t}}_{N}^{\mathbb{R}} = \vecteur{t}_{\mathbb{R}} + \Delta\vecteur{t}_{\mathbb{R}}$}.
As previously, from \eqref{eq:scatter_emp_real} we obtain \eqref{eq:11}.
Since {\footnotesize $\left(\vecteur{A}_N + \left(\bsym{\mathcal{P}}\otimes\bsym{\mathcal{P}}\right)\vecteur{A}_N\left(\bsym{\mathcal{P}}\otimes\bsym{\mathcal{P}}\right)^T\right)\overset{\mathbb{P}}{\rightarrow} 2\vecteur{A}$}, the Slutsky's lemma leads to
{\scriptsize \begin{align*}
\sqrt{N}\vec{}\left(\mathbf{\widehat{M}}_{N}^{\mathbb{R}}-\vecteur{M}_{\mathbb{R}}\right)\overset{d}{\rightarrow}&\dfrac{1}{2}\left(\vecteur{I}_{p^2}+\bsym{\mathcal{P}}\otimes\bsym{\mathcal{P}}\right)\left(\vecteur{M}_u^{1/2}\otimes\vecteur{M}_u^{1/2}\right)\vecteur{A}^{-1}\bsym{\omega}
\end{align*}}%
Similarly, from \eqref{eq:moy_emp_real1} we obtain
{\footnotesize \begin{align*}
\bsym{\chi}_N = \vecteur{C}_N\sqrt{N}\vec{}\left(\Delta\vecteur{M}_{\mathbb{R}}\right) + \vecteur{D}_N\sqrt{N}\vecteur{M}_{\mathbb{R}}^{-1/2}\Delta\vecteur{t}_{\mathbb{R}}
\end{align*}}%
and thus {\small $\sqrt{N}\left(\mathbf{\hat{t}}_{N}^{\mathbb{R}}-\vecteur{t}_{\mathbb{R}}\right)\overset{d}{\rightarrow}\vecteur{M}_{\mathbb{R}}^{1/2}\vecteur{D}^{-1}\bsym{\chi}$}, which means that $\mathbf{\hat{t}}_{N}^{\mathbb{R}}$ and $\mathbf{\hat{t}}_{N}^{u}$ have the same asymptotic distribution.\\
 
\bm{Lastly}, we also introduce {\footnotesize $\bsym{\xi}_n = \begin{bmatrix}
a_n\left(\vecteur{k}_n\otimes\vecteur{k}_n\right)-\vec{}\left(\vecteur{I}_{p}\right) \\ 
c_n\vecteur{k}_n
\end{bmatrix}$}, which are zero mean and i.i.d., then 
{\footnotesize \begin{align*}
\begin{pmatrix}
\bsym{\omega}_N \\ 
\bsym{\chi}_N
\end{pmatrix} = \dfrac{1}{\sqrt{N}}\sum\limits_{n=1}^N \bsym{\xi}_n\overset{d}{\rightarrow} \bsym{\xi}=\begin{pmatrix}
\bsym{\omega} \\ 
\bsym{\chi}
\end{pmatrix}\sim\mathcal{N}\left(\vecteur{0},\text{Var}\left(\bsym{\xi}_1\right)\right)
\end{align*}}%
Furthermore, since we have \eqref{eq:indep}, we obtain $\bsym{\omega}\perp\bsym{\chi}$.

%\appendices

% use section* for acknowledgement
%\section*{Acknowledgment}
% Can use something like this to put references on a page
% by themselves when using endfloat and the captionsoff option.
\ifCLASSOPTIONcaptionsoff
  \newpage
\fi

\bibliographystyle{IEEEtran}
\bibliography{C:/Users/meriaux_bru/Desktop/Articles_conf_papiers_journal/Bruno}

% that's all folks
\end{document}